\newtheorem{theorem}{Theorem}
\newtheorem{remark}{Remark}
\newtheorem{corollary}{Corollary}
\newtheorem{example}{Example}
\newtheorem{proposition}{Proposition}
\newtheorem{stopping criterion}{stopping criterion}
\newcommand{\beq}{\begin{equation}}
\newcommand{\eeq}{\end{equation}}
\newcommand{\beqnn}{\begin{equation*}}
\newcommand{\eeqnn}{\end{equation*}}
\newcommand{\beqy}{\begin{eqnarray}}
\newcommand{\eeqy}{\end{eqnarray}}
\newcommand{\beqynn}{\begin{eqnarray*}}
\newcommand{\eeqynn}{\end{eqnarray*}}
\newcommand{\bit}{\begin{itemize}}
\newcommand{\eit}{\end{itemize}}
\newcommand{\ben}{\begin{enumerate}}
\newcommand{\een}{\end{enumerate}}
\newcommand{\bex}{\begin{example}}
\newcommand{\eex}{\end{example}}
\newcommand{\balg}[1]{\begin{algorithm} \caption{#1}}
\newcommand{\ealg}{\end{algorithm}}
\newcommand{\balgc}{\begin{algorithmic}[1]}
\newcommand{\ealgc}{\end{algorithmic}}
\newcommand{\bary}{\begin{array}}
\newcommand{\eary}{\end{array}}
\newcommand{\bmx}{\begin{bmatrix}}
\newcommand{\emx}{\end{bmatrix}}
\newcommand{\bsmx}{\left[\begin{smallmatrix}}
\newcommand{\esmx}{\end{smallmatrix}\right]}
\newcommand{\bmxc}[1]{\left[\begin{array}{@{}#1@{}}}
\newcommand{\emxc}{\end{array}\right]}
\newcommand{\bcn}{\begin{center}}
\newcommand{\ecn}{\end{center}}
\newcommand{\diag}{\mathrm{diag}}
\newcommand{\bigO}{{\mathcal{O}}}
\newcommand{\A}{\boldsymbol{A}}
\newcommand{\B}{\boldsymbol{B}}
\newcommand{\F}{\boldsymbol{F}}
\newcommand{\G}{\boldsymbol{G}}
\renewcommand{\H}{\boldsymbol{H}}
\newcommand{\I}{\boldsymbol{I}}
\newcommand{\R}{\boldsymbol{R}}
\newcommand{\X}{{\boldsymbol{X}}}
\newcommand{\Z}{\boldsymbol{Z}}
\renewcommand{\a}{\boldsymbol{a}}
\newcommand{\x}{{\boldsymbol{x}}}
\newcommand{\0}{{\boldsymbol{0}}}
\newcommand{\bbR}{{\bar{\R}}}
\DeclareMathOperator*{\argmin}{arg\,min}
\DeclareMathOperator*{\argmax}{arg\,max}
\begin{document}

\title{Some Properties of Successive Minima and Their Applications}


%
%

\author{\IEEEauthorblockN{Jinming~Wen}
\IEEEauthorblockA{
Edward S. Rogers Sr. Department of Electrical and Computer Engineering\\
University of Toronto, Toronto, M5S 3G4, Canada \\
(E-mail: jinming.wen@utoronto.ca)}


}

\maketitle

\begin{abstract}
A lattice is a set of all the integer linear combinations of certain linearly independent vectors.
One of the most important concepts on lattice is the successive minima which is of vital importance
from both theoretical and practical applications points of view.
In this paper, we first study some properties of successive minima and then employ some of them to improve the suboptimal algorithm for solving an
optimization problem about maximizing the achievable rate of the integer-forcing strategy for
cloud radio access networks in \cite{BakN17}.
\end{abstract}

\begin{IEEEkeywords}
Successive minima, integer-forcing, C-RAN.
\end{IEEEkeywords}

\IEEEpeerreviewmaketitle
\section{Introduction}
\label{sec:Introduction}


A lattice is a set of all the integer linear combinations of certain linearly independent vectors.
Specifically, for any full column rank matrix $\A\in \mathbb{R}^{m\times n}$ ($m\geq n$),
the lattice $\mathcal{L}(\A)$ generated by $\A$ is defined by
\beq
\label{e:latticeA}
\mathcal{L}(\A)=\{\A\x|\x \in \mathbb{Z}^n\},
\eeq
and $\A$ is called as the basis matrix of $\mathcal{L}(\A)$,
whose dimension is defined as the rank of $\A$.


One of the most important concepts on lattice is the successive minima.
Specifically, for any $n$-dimensional $\mathcal{L}(\A)$, its $i$-th ($1\leq i\leq n$)
successive minimum $\lambda_i(\A)$ is defined as the smallest $r$ such that the closed
$n$-dimensional ball $\mathbb{B}(\0,r)$ of radius $r$
centered at the origin contains $i$ linearly independent lattice vectors.


Finding a vector whose length equals to a certain successive minimum is needed in a variety of applications. For example, in communications (see, e.g., \cite{Mow94}) and cryptography
(see, e.g., \cite{DanG02}),
one frequently  needs to solve the following shortest vector problem (SVP) on $\mathcal{L}(\A)$:
\beqnn
\min_{\x \in\mathbb{Z}^n\backslash \{\0\}} \|\A\x\|_2,
\eeqnn
whose solution $\x$ satisfies $\|\A\x\|_2=\lambda_1(\A)$.
In some other applications, such as,
integer-forcing (IF) linear receiver design \cite{ZhaNEG14},
(after some transformations) one needs to solve a
Shortest Independent Vector Problem (SIVP) on lattice $\mathcal{L}(\A)$, i.e.,
finding an invertible matrix
$\X=[\x_1,\ldots, \x_{n}]\in \mathbb{Z}^{n\times n}$ such that
\beq
\label{e:SIVP}
\max_{1\leq i\leq n} \|\A\x_i\|_2\leq \lambda_{n}(\A).
\eeq
A closely related problem to the SIVP is a Successive Minima Probem (SMP), i.e.,
finding an invertible matrix
$\X=[\x_1,\ldots, \x_{n}]\in \mathbb{Z}^{n\times n}$ such that
\beq
\label{e:SMP}
\|\A\x_i\|_2= \lambda_{i}(\A),\quad i=1,2,\ldots, n.
\eeq
Solving an SMP is needed in some practical applications, such as physical-layer network coding \cite{FenSK13},
the expanded compute-and-forward framework \cite{NazCNC16} and IF source coding \cite{OrdE17}.


Cloud radio access networks (C-RANs) is a promising framework for 5G wireless communication systems.
Recently, an IF framework with two architectures for uplink C-RANs has been proposed in \cite{BakN17}.
Simulations in \cite{BakN17} indicate that for the scenario where channel state is available
to the receivers only, the two architectures can nearly match and often outperforms
Wyner-Ziv-based strategies, respectively.

Successive minima is of vital importance from both theoretical and application points of view.
Thus, this paper aims to develop some properties of successive minima.
Some of them are useful for IF design for uplink C-RANs.

The rest of the paper is organized as follows.
We develop some properties of successive minima in Section \ref{s:property},
and use  some of them to improve the suboptimal algorithm for IF design for C-RAN in \cite{BakN17}
in Section \ref{s:IF}.
Finally, conclusions are given in Section \ref{s:con}.

{\it Notation.}
Let $\mathbb{R}^{m\times n}$ and $\mathbb{Z}^{m\times n}$ respectively
stand for the space of the $m\times n$ real and integer matrices.
Let $\mathbb{R}^n$ and $\mathbb{Z}^n$ denote the space of the $n$-dimensional  real
and integer column vectors, respectively.
For a symmetric positive definite (SPD) matrix $\G\in\mathbb{R}^{n\times n}$, we use
chol($\G$) to denote the Cholesky factor of $\G$.
For a matrix $\A$, let $a_{ij}$ denote its element at row  $i$ and column  $j$,
$\a_i$ be its $i$-th column.
For a vector $\x$, let $x_i$ be its $i$-th element.

\section{Some Properties of Successive Minimum}
\label{s:property}

In this section, we first develop the monotonic property of successive minima.
Then, we propose a lower and an upper bound on them.
Some of these properties will be used in the next section for IF design for C-RAN.

To prove our proposed properties of successive minima, we need to introduce the following well-known property of successive minima:
\beq
\label{e:nondecrease}
\lambda_1(\A)\leq \lambda_2(\A)\leq \cdots\leq \lambda_n(\A)
\eeq
for any full column rank matrix $\A\in \mathbb{R}^{m\times n}$.
In fact, \eqref{e:nondecrease} can be easily seen from the definition of successive minima.


\subsection{Monotonic property of successive minima}
In the following, we develop some properties of successive minima.
Since in communications, it is often that one needs to solve an SMP on a lattice whose basis matrix
is not explicitly given, but is the Cholesky factor of an SPD matrix, in the sequel,
we develop some properties of successive minima of some lattices whose bases matrices are
the Cholesky factors of some SPD matrices.


\begin{theorem}
\label{t:adlb}
Suppose that $\G_1, \G_2\in \mathbb{R}^{n\times n}$ are SPD matrices.
Denote
\[
\R_1=\mbox{chol}(\G_1), \R_2=\mbox{chol}(\G_2), \R_3=\mbox{chol}(\G_1+\G_2).
\]
Then, for $i=1,2,\ldots, n$, we have
\begin{align}
\label{e:adlb}
\lambda_i(\R_3)\geq \max\{\sqrt{\lambda_i^2(\R_1)+\lambda_1^2(\R_2)}, \sqrt{\lambda_i^2(\R_2)+\lambda_1^2(\R_1)}\}.
\end{align}
\end{theorem}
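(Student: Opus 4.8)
The plan is to exploit the one structural fact that ties the three lattices together. Since $\R_j=\mathrm{chol}(\G_j)$ satisfies $\R_j^{T}\R_j=\G_j$, for every integer vector $\x\in\mathbb{Z}^n$ we have
\[
\|\R_3\x\|_2^2=\x^{T}(\G_1+\G_2)\x=\x^{T}\G_1\x+\x^{T}\G_2\x=\|\R_1\x\|_2^2+\|\R_2\x\|_2^2 .
\]
Thus a single integer vector $\x$ simultaneously yields lattice points $\R_1\x\in\mathcal{L}(\R_1)$, $\R_2\x\in\mathcal{L}(\R_2)$, $\R_3\x\in\mathcal{L}(\R_3)$ whose squared lengths add up. Everything else is bookkeeping with the definition of the successive minima, and I would deliberately avoid trying to produce one basis that attains all the $\lambda_i$'s at once (which may fail in dimension $\ge 5$); working directly with the definition of $\lambda_i$ is what makes the argument go through.

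First I would prove the bound $\lambda_i(\R_3)\ge\sqrt{\lambda_i^2(\R_1)+\lambda_1^2(\R_2)}$. By the definition of the $i$-th successive minimum, the closed ball $\mathbb{B}(\0,\lambda_i(\R_3))$ contains $i$ linearly independent vectors of $\mathcal{L}(\R_3)$, say $\R_3\x_1,\dots,\R_3\x_i$ with $\x_1,\dots,\x_i\in\mathbb{Z}^n$ linearly independent (linear independence transfers because $\R_3$ is invertible) and $\|\R_3\x_j\|_2\le\lambda_i(\R_3)$ for each $j$. Since every $\x_j\ne\0$, we have $\|\R_2\x_j\|_2\ge\lambda_1(\R_2)$, so the identity gives
\[
\|\R_1\x_j\|_2^2=\|\R_3\x_j\|_2^2-\|\R_2\x_j\|_2^2\le\lambda_i^2(\R_3)-\lambda_1^2(\R_2),\qquad j=1,\dots,i,
\]
and in particular the right-hand side is nonnegative. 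Hence $\mathcal{L}(\R_1)$ contains the $i$ linearly independent vectors $\R_1\x_1,\dots,\R_1\x_i$, all of Euclidean length at most $\sqrt{\lambda_i^2(\R_3)-\lambda_1^2(\R_2)}$, so the definition of $\lambda_i(\R_1)$ forces $\lambda_i(\R_1)\le\sqrt{\lambda_i^2(\R_3)-\lambda_1^2(\R_2)}$, i.e. $\lambda_i^2(\R_3)\ge\lambda_i^2(\R_1)+\lambda_1^2(\R_2)$.

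Then I would get the companion bound $\lambda_i(\R_3)\ge\sqrt{\lambda_i^2(\R_2)+\lambda_1^2(\R_1)}$ by repeating the same argument verbatim with the roles of $\G_1$ and $\G_2$ (equivalently $\R_1$ and $\R_2$) interchanged, which is legitimate because $\G_1+\G_2=\G_2+\G_1$. Taking the maximum of the two bounds yields \eqref{e:adlb}. I do not expect a genuine obstacle here: once the quadratic-form identity is in place the proof is short, and the only place demanding a little care is the precise use of the definition of successive minima — namely that the defining ball is closed so that $i$ linearly independent attaining vectors really exist, and conversely that exhibiting $i$ linearly independent lattice vectors of norm at most $r$ certifies $\lambda_i\le r$.
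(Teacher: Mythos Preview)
Your proof is correct and follows essentially the same approach as the paper: both arguments rest on the identity $\|\R_3\x\|_2^2=\|\R_1\x\|_2^2+\|\R_2\x\|_2^2$, take $i$ linearly independent integer vectors realizing $\lambda_i(\R_3)$, and combine the trivial bound $\|\R_2\x_j\|_2\ge\lambda_1(\R_2)$ with the definition of $\lambda_i(\R_1)$. The only cosmetic difference is that the paper first selects the single index $j=\argmax_{1\le k\le i}\x_k^{\top}\G_1\x_k$ and lower-bounds $\|\R_3\x_j\|_2^2$ termwise, whereas you upper-bound $\|\R_1\x_j\|_2^2$ for \emph{all} $j\le i$ and read off $\lambda_i(\R_1)$ directly; your worry about a basis attaining all successive minima is moot here, since the paper (like you) only needs $n$ linearly independent lattice vectors, which always exist.
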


\begin{proof}
Since both $\G_1$ and $\G_2$ are SPD matrices, $\G_1+\G_2$ is also an SPD matrix.
Hence, $\R_3=\mbox{chol}(\G_1+\G_2)$ is an invertible matrix,
implying that $\mathcal{L}(\R_3)$ is well-defined.

To show \eqref{e:adlb}, it is equivalent to show
\beq
\label{e:adlb11}
\lambda_i(\R_3)\geq \sqrt{\lambda_i^2(\R_1)+\lambda_1^2(\R_2)}
\eeq
and
\beq
\label{e:adlb12}
\lambda_i(\R_3)\geq \sqrt{\lambda_i^2(\R_2)+\lambda_1^2(\R_1)}.
\eeq
In the following, we only prove \eqref{e:adlb11} since \eqref{e:adlb12}
can be similarly proved.

Since the solution of the SMP on $\mathcal{L}(\R_3)$ (see \eqref{e:SMP}) always exists \cite{Cas12},
there exists an invertible matrix $\X=[\x_1,\ldots, \x_{n}]\in \mathbb{Z}^{n\times n}$ such that
\[
\|\R_3\x_i\|_2= \lambda_{i}(\R_3),\quad i=1,2,\ldots, n,
\]
which combing with $\R_3=$\mbox{chol}($\G_1+\G_2$) implies that
\[
\x_i^\top(\G_1+\G_2)\x_i= \lambda_{i}^2(\R_3),\quad i=1,2,\ldots, n.
\]

Let
\beq
\label{e:j}
j=\argmax_{1\leq k\leq i}\x_k^\top\G_1\x_k.
\eeq
Then, according to \eqref{e:nondecrease}, we have
\[
\lambda_{i}^2(\R_3)\geq \lambda_{j}^2(\R_3)=\x_j^\top\G_1\x_j+\x_j^\top\G_2\x_j.
\]
Since $\X$ is invertible, $\x_1,\cdots, \x_{i}$ are linearly independent,
so are $\R\x_1,\cdots, \R\x_{i}$.
Then, by \eqref{e:j} and the definition of successive minima, one can see that
\[
\x_j^\top\G_1\x_j=\|\R_1\x_j\|^2_2\geq \lambda_i^2(\R_1).
\]
Since $\G_2$ is an SPD matrix, $\R_2$ is invertible.
Then by the definition of successive minima and the fact that $\x_j$ is a nonzero integer vector, we can see
that
\[
\x_j^\top\G_2\x_j=\|\R_2\x_j\|^2_2\geq \lambda_{1}^2(\R_2).
\]
Hence, \eqref{e:adlb11} follows from the above three equations.
\end{proof}

\begin{remark}
\label{r:semid}
Note that it is not necessary that both $\G_1$ and $\G_2$ are SPD matrices in Theorem \ref{t:adlb}.
In fact, one of them is an SPD matrix and the other one is a non-invertible symmetric positive
semidefinite matrix is enough.
More specifically, if $\G_1$ is an SPD matrix and $\G_2$ is a non-invertible symmetric positive
semidefinite matrix, then \eqref{e:adlb} is reduced to $\lambda_i(\R_3)\geq\lambda_i(\R_1)$.
Similarly, if $\G_2$ is an SPD matrix and $\G_1$ is a non-invertible symmetric positive
semidefinite matrix, then \eqref{e:adlb} is reduced to $\lambda_i(\R_3)\geq\lambda_i(\R_2)$.
\end{remark}

We would like to point out that the equality in \eqref{e:adlb} is achievable.
For more details, see the following example.
\begin{example}
Suppose that $\alpha_1,\alpha_2,\cdots, \alpha_n$ satisfy
\beq
\label{e:alpha1n}
0<\alpha_1\leq \alpha_2\leq\cdots\leq\alpha_n.
\eeq
Let $\G_1$ be a diagonal matrix with its diagonal entries being $\alpha_1,\alpha_2,\cdots, \alpha_n$,
i.e., $\G_1=\diag(\alpha_1,\alpha_2,\cdots, \alpha_n)$, $\G_2=\beta\I$ for some $\beta>0$.
Then
\begin{align*}
\R_1&=\diag(\sqrt{\alpha_1},\sqrt{\alpha_2},\cdots, \sqrt{\alpha_n}), \,\;\R_2=\sqrt{\beta}\I,\\
\R_3&=\diag(\sqrt{\alpha_1+\beta},\sqrt{\alpha_2+\beta},\cdots, \sqrt{\alpha_n+\beta})
\end{align*}
which combing with \eqref{e:alpha1n} implies that
\[
\lambda_i(\R_1)=\sqrt{\alpha_i}, \,\;\lambda_i(\R_2)=\sqrt{\beta}, \,\; \lambda_i(\R_3)=\sqrt{\alpha_i+\beta}
\]
for $i=1,2,\ldots, n$. Hence,
\begin{align*}
\sqrt{\lambda_i^2(\R_1)+\lambda_1^2(\R_2)}&=\sqrt{\alpha_i+\beta}\\
&\geq \sqrt{\alpha_1+\beta}=\sqrt{\lambda_i^2(\R_2)+\lambda_1^2(\R_1)}.
\end{align*}
Thus, the equality in \eqref{e:adlb} is reached.
\end{example}

By using Theorem \ref{t:adlb}, we can prove the following theorem which  provides an upper
bound on the successive minima of a lattice whose basis matrix is given by the Cholesky factor of
the inverse of the sum of two SPD matrices.

\begin{theorem}
\label{t:adivlb}
Suppose that $\G_1, \G_2\in \mathbb{R}^{n\times n}$ are SPD matrices.
Denote $\hat{\R}_1=$chol($\G_1^{-1})$, $\hat{\R}_2=\mbox{chol}(\G_2^{-1})$,
$\hat{\R}_3=\mbox{chol}((\G_1+\G_2)^{-1})$ and
\begin{align*}
\hat{\R}_4&=\mbox{chol}(\G_1^{-1}(\G_1^{-1}+\G_2^{-1})^{-1}\G_1^{-1}),\\
\hat{\R}_5&=\mbox{chol}(\G_2^{-1}(\G_1^{-1}+\G_2^{-1})^{-1}\G_2^{-1}).
\end{align*}
Then, for $i=1,2,\ldots, n$, we have
\begin{align}
\label{e:adivlb1}
\lambda_i(\hat{\R}_1)&\geq \max\{\sqrt{\lambda_i^2(\hat{\R}_3)+\lambda_1^2(\hat{\R}_4)},
\sqrt{\lambda_i^2(\hat{\R}_4)+\lambda_1^2(\hat{\R}_3)}\},\\
\lambda_i(\hat{\R}_2)&\geq \max\{\sqrt{\lambda_i^2(\hat{\R}_3)+\lambda_1^2(\hat{\R}_5)},
\sqrt{\lambda_i^2(\hat{\R}_5)+\lambda_1^2(\hat{\R}_3)}\}.
\label{e:adivlb2}
\end{align}
\end{theorem}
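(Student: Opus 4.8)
The plan is to deduce Theorem~\ref{t:adivlb} from Theorem~\ref{t:adlb} by writing $\G_1^{-1}$ (resp.\ $\G_2^{-1}$) as a sum of two SPD matrices whose Cholesky factors are precisely $\hat{\R}_3,\hat{\R}_4$ (resp.\ $\hat{\R}_3,\hat{\R}_5$), and then invoking \eqref{e:adlb} directly.

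The key step is the matrix identity
\[
\G_1^{-1}=(\G_1+\G_2)^{-1}+\G_1^{-1}(\G_1^{-1}+\G_2^{-1})^{-1}\G_1^{-1},
\]
equivalently $(\G_1+\G_2)^{-1}=\G_1^{-1}-\G_1^{-1}(\G_1^{-1}+\G_2^{-1})^{-1}\G_1^{-1}$. I would verify it by first rewriting $\G_1^{-1}-\G_1^{-1}(\G_1^{-1}+\G_2^{-1})^{-1}\G_1^{-1}$ as $\G_1^{-1}(\G_1^{-1}+\G_2^{-1})^{-1}\G_2^{-1}$ (factor $\G_1^{-1}(\G_1^{-1}+\G_2^{-1})^{-1}$ out on the left and simplify the remaining bracket $(\G_1^{-1}+\G_2^{-1})-\G_1^{-1}=\G_2^{-1}$), and then checking that multiplying this expression on the left by $\G_1+\G_2$ yields $\I$, using the factorizations $\I+\G_2\G_1^{-1}=\G_2(\G_2^{-1}+\G_1^{-1})$. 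I would then argue that both summands are SPD: $(\G_1+\G_2)^{-1}$ is the inverse of an SPD matrix, while $\G_1^{-1}(\G_1^{-1}+\G_2^{-1})^{-1}\G_1^{-1}$ is a congruence of the SPD matrix $(\G_1^{-1}+\G_2^{-1})^{-1}$ by the invertible symmetric matrix $\G_1^{-1}$, hence SPD as well.

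With this decomposition in hand, \eqref{e:adivlb1} is immediate: apply Theorem~\ref{t:adlb} to the SPD pair $(\G_1+\G_2)^{-1}$ and $\G_1^{-1}(\G_1^{-1}+\G_2^{-1})^{-1}\G_1^{-1}$, whose sum is $\G_1^{-1}$; in the notation of that theorem this makes $\R_1=\hat{\R}_3$, $\R_2=\hat{\R}_4$, and $\R_3=\mbox{chol}(\G_1^{-1})=\hat{\R}_1$, so \eqref{e:adlb} becomes exactly \eqref{e:adivlb1}. Interchanging the roles of $\G_1$ and $\G_2$ and using the analogous identity $\G_2^{-1}=(\G_1+\G_2)^{-1}+\G_2^{-1}(\G_1^{-1}+\G_2^{-1})^{-1}\G_2^{-1}$ gives \eqref{e:adivlb2} in the same way. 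The only real obstacle is establishing the matrix identity together with the positive definiteness of the second summand; everything after that is a one-line invocation of Theorem~\ref{t:adlb}.
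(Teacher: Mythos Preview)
Your proposal is correct and follows exactly the paper's approach: rewrite $\G_1^{-1}$ and $\G_2^{-1}$ as sums of two SPD matrices via the identity $(\G_1+\G_2)^{-1}=\G_k^{-1}-\G_k^{-1}(\G_1^{-1}+\G_2^{-1})^{-1}\G_k^{-1}$ (which the paper simply attributes to the Woodbury matrix identity), observe all terms are SPD, and invoke Theorem~\ref{t:adlb}. The only cosmetic difference is that you verify the identity by hand rather than citing Woodbury.
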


\begin{proof}
By the Woodbury matrix identity, we have
\begin{align*}
(\G_1+\G_2)^{-1}=\G_1^{-1}-\G_1^{-1}(\G_1^{-1}+\G_2^{-1})^{-1}\G_1^{-1},\\
(\G_1+\G_2)^{-1}=\G_2^{-1}-\G_2^{-1}(\G_1^{-1}+\G_2^{-1})^{-1}\G_2^{-1}.
\end{align*}
Then
\begin{align}
\label{e:inverse}
\G_1^{-1}=(\G_1+\G_2)^{-1}+\G_1^{-1}(\G_1^{-1}+\G_2^{-1})^{-1}\G_1^{-1},\\
\G_2^{-1}=(\G_1+\G_2)^{-1}+\G_2^{-1}(\G_1^{-1}+\G_2^{-1})^{-1}\G_2^{-1}. \nonumber
\end{align}
Since both $\G_1$ and  $\G_2$ are SPD matrices, so are
$\G_1^{-1}, \G_2^{-1}, (\G_1+\G_2)^{-1}$, $\G_1^{-1}(\G_1^{-1}+\G_2^{-1})^{-1}\G_1^{-1}$
and $\G_2^{-1}(\G_1^{-1}+\G_2^{-1})^{-1}\G_2^{-1}$.
Hence, \eqref{e:adivlb1} and \eqref{e:adivlb2} follow from \eqref{e:adlb}.
\end{proof}


We would like to point out that the equalities in \eqref{e:adivlb1}
and \eqref{e:adivlb2} are achievable. For more details, see the following example.
\begin{example}
Let $\G_1=\diag(\alpha_1,\alpha_2,\cdots, \alpha_n)$ for $\alpha_1,\alpha_2,\cdots, \alpha_n$
satisfying \eqref{e:alpha1n}, $\G_2=\beta\I-\G_1$ for some $\beta>\alpha_n$.
In the following, we show that both the equalities in  \eqref{e:adivlb1} and \eqref{e:adivlb2}
are achievable. By some direct calculations, we have
\begin{align*}
\hat{\R}_1&=\diag\left(\frac{1}{\sqrt{\alpha_1}},\frac{1}{\sqrt{\alpha_2}},\cdots, \frac{1}{\sqrt{\alpha_n}}\right), \,\;\hat{\R}_3=\frac{1}{\sqrt{\beta}}\I,\\
\hat{\R}_2&=\diag\left(\frac{1}{\sqrt{\beta-\alpha_1}},\frac{1}{\sqrt{\beta-\alpha_2}},\cdots, \frac{1}{\sqrt{\beta-\alpha_n}}\right),\\
\hat{\R}_4&=\diag\left(\sqrt{\frac{1}{\alpha_1}-\frac{1}{\beta}},
\sqrt{\frac{1}{\alpha_2}-\frac{1}{\beta}},\cdots,
\sqrt{\frac{1}{\alpha_n}-\frac{1}{\beta}}\right),\\
\hat{\R}_5&=\frac{1}{\sqrt{\beta}}\diag\left(\frac{\sqrt{\alpha_1}}{\sqrt{\beta-\alpha_1}},
\frac{\sqrt{\alpha_2}}{\sqrt{\beta-\alpha_2}},\cdots, \frac{\sqrt{\alpha_n}}{\sqrt{\beta-\alpha_n}}\right).
\end{align*}
Then, by \eqref{e:alpha1n}, for $i=1,2,\ldots, n$, we have
\begin{align*}
\lambda_i(\hat{\R}_1)&=\frac{1}{\sqrt{\alpha_{n-i+1}}}, \,\;
\lambda_i(\hat{\R}_2)=\frac{1}{\sqrt{\beta-\alpha_{i}}}, \,\;
\lambda_i(\hat{\R}_3)=\frac{1}{\sqrt{\beta}}, \\
\lambda_i(\hat{\R}_4)&=\sqrt{\frac{1}{\alpha_{n-i+1}}-\frac{1}{\beta}},\,\;
\lambda_i(\hat{\R}_5)=\sqrt{\frac{1}{\beta-\alpha_{i}}-\frac{1}{\beta}}.
\end{align*}
By some simple calculations, one can easily show that
both the equalities in  \eqref{e:adivlb1} and \eqref{e:adivlb2} are reached.
\end{example}

\begin{remark}
It is worth pointing out that \eqref{e:adlb}, \eqref{e:adivlb1} and \eqref{e:adivlb2} cannot be
generalized to $i>1$, i.e., none of the following inequalities
\begin{align*}
\lambda_i(\R_3)\geq \sqrt{\lambda_i^2(\R_1)+\lambda_i^2(\R_2)},\\
\lambda_i(\hat{\R}_1)\geq \sqrt{\lambda_i^2(\hat{\R}_3)+\lambda_i^2(\hat{\R}_4)}, \\
\lambda_i(\hat{\R}_2)\geq \sqrt{\lambda_i^2(\hat{\R}_3)+\lambda_i^2(\hat{\R}_5)}.
\end{align*}
always hold. Indeed, the following example shows this.
\begin{example}
\label{e:ex1}
Let
\begin{align*}
\G_1=\bmx3&0\\0&1\emx, \,\;\G_2=\bmx1&0\\0&8\emx.
\end{align*}
Then
\begin{align*}
\R_1=\bmx \sqrt{3}&0\\0&1\emx, \,\;\R_2=\bmx1&0\\0&\sqrt{8}\emx, \,\;\R_3=\bmx2&0\\0&3\emx
\end{align*}
which implies that
\[
\lambda_2(\R_1)=\sqrt{3}, \,\;\lambda_2(\R_2)=\sqrt{8}, \,\; \lambda_2(\R_3)=3.
\]
Then, one can see that
\[
\lambda_2(\R_3)< \sqrt{\lambda_2^2(\R_1)+\lambda_2^2(\R_2)}.
\]

Furthermore, by some simple calculations, one can easily show that
\begin{align*}
\lambda_2(\hat{\R}_1)< \sqrt{\lambda_2^2(\hat{\R}_3)+\lambda_2^2(\hat{\R}_4)}, \\
\lambda_2(\hat{\R}_2)< \sqrt{\lambda_2^2(\hat{\R}_3)+\lambda_2^2(\hat{\R}_5)}.
\end{align*}
\end{example}
\end{remark}

From Theorems \ref{t:adlb} and \ref{t:adivlb},
one immediately obtains the following monotonic property of successive minima.
\begin{corollary}
\label{c:mono}
Let $\G_1, \G_2\in \mathbb{R}^{n\times n}$ be SPD matrices such that $\G_1-\G_2$ is
also an SPD matrix.
Let $\R_1, \R_2$ and $\hat{\R}_1, \hat{\R}_2$ be defined as in Theorems \ref{t:adlb}
and \ref{t:adivlb}, then for $i=1,2,\ldots, n$, we have
\beq
\label{e:increase}
\lambda_i(\R_1)> \lambda_i(\R_2), \,\;\lambda_i(\hat{\R}_1)< \lambda_i(\hat{\R}_2).
\eeq
\end{corollary}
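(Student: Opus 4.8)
The plan is to deduce Corollary \ref{c:mono} directly from Theorems \ref{t:adlb} and \ref{t:adivlb} by the right choice of the two SPD matrices. Write $\G_1 = \G_2 + (\G_1 - \G_2)$, where by hypothesis $\G_1 - \G_2$ is SPD. First I would apply Theorem \ref{t:adlb} with the roles ``$\G_1$'' $\mapsto \G_2$ and ``$\G_2$'' $\mapsto \G_1 - \G_2$; then the matrix called $\R_3$ in that theorem is $\mbox{chol}(\G_2 + (\G_1-\G_2)) = \mbox{chol}(\G_1) = \R_1$, while the matrix called $\R_1$ there is $\mbox{chol}(\G_2) = \R_2$, and the matrix called $\R_2$ there is $\mbox{chol}(\G_1-\G_2)$, which I will denote $\R_0$. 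Inequality \eqref{e:adlb} then gives, for each $i$,
\[
\lambda_i(\R_1) \;\geq\; \sqrt{\lambda_i^2(\R_2) + \lambda_1^2(\R_0)}.
\]
Since $\G_1 - \G_2$ is SPD, $\R_0$ is invertible, so $\lambda_1(\R_0) > 0$; hence $\sqrt{\lambda_i^2(\R_2)+\lambda_1^2(\R_0)} > \lambda_i(\R_2)$, which yields the strict inequality $\lambda_i(\R_1) > \lambda_i(\R_2)$, the first claim in \eqref{e:increase}.

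For the second claim I would apply Theorem \ref{t:adivlb} with the same substitution ``$\G_1$'' $\mapsto \G_2$, ``$\G_2$'' $\mapsto \G_1 - \G_2$. With this substitution the matrix $\hat{\R}_1$ of that theorem becomes $\mbox{chol}(\G_2^{-1}) = \hat{\R}_2$, the matrix $\hat{\R}_3$ becomes $\mbox{chol}((\G_2 + (\G_1-\G_2))^{-1}) = \mbox{chol}(\G_1^{-1}) = \hat{\R}_1$, and the matrix $\hat{\R}_4$ becomes $\mbox{chol}\!\big(\G_2^{-1}(\G_2^{-1}+(\G_1-\G_2)^{-1})^{-1}\G_2^{-1}\big)$, which is SPD because $\G_2$ and $\G_1 - \G_2$ are. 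Inequality \eqref{e:adivlb1} (in the substituted notation) then reads
\[
\lambda_i(\hat{\R}_2) \;\geq\; \sqrt{\lambda_i^2(\hat{\R}_1) + \lambda_1^2(\hat{\R}_4')},
\]
where $\hat{\R}_4'$ denotes that SPD Cholesky factor; since it is invertible, $\lambda_1(\hat{\R}_4') > 0$, and the right-hand side strictly exceeds $\lambda_i(\hat{\R}_1)$, giving $\lambda_i(\hat{\R}_2) > \lambda_i(\hat{\R}_1)$, i.e., $\lambda_i(\hat{\R}_1) < \lambda_i(\hat{\R}_2)$.

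The only thing that requires care — and the step I would treat as the main (minor) obstacle — is the bookkeeping of which matrix in Theorems \ref{t:adlb}/\ref{t:adivlb} maps to which under the substitution $(\G_1,\G_2)\leftarrow(\G_2,\G_1-\G_2)$, and in particular verifying that $\hat{\R}_3$ in Theorem \ref{t:adivlb} indeed becomes $\hat{\R}_1$ as defined in Corollary \ref{c:mono} (this uses $\G_2 + (\G_1-\G_2) = \G_1$), and that $\hat{\R}_1$ there becomes $\hat{\R}_2$ of the corollary. Everything else is immediate: the strictness comes for free from the positive definiteness of $\G_1 - \G_2$, which makes the ``added'' term a genuine invertible SPD matrix, so its first successive minimum is strictly positive. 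I would write the proof in two short paragraphs, one for each inequality, each consisting of the substitution, the invocation of the relevant theorem, and the observation that $\lambda_1(\cdot) > 0$ for an invertible basis.
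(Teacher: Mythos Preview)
Your proposal is correct and matches the paper's intended argument: the paper simply states that the corollary follows immediately from Theorems~\ref{t:adlb} and~\ref{t:adivlb}, and your substitution $(\G_1,\G_2)\leftarrow(\G_2,\G_1-\G_2)$ is precisely how one unpacks that. The bookkeeping you flag as the only delicate point is handled correctly, and the strictness indeed comes for free from $\lambda_1(\mbox{chol}(\G_1-\G_2))>0$.
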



From Corollary \ref{c:mono}, we have the following result which shows that the
monotonic property of successive minima keeps unchanged by adding a symmetric positive semidefinite
matrix and/or left multiplying a full column rank matrix followed by right multiplying the transpose
of this matrix.
\begin{corollary}
\label{c:mono2}
Let $\G_1, \G_2\in \mathbb{R}^{n\times n}$ be SPD matrices such that $\G_1-\G_2$ is also SPD.
Let $\G\in \mathbb{R}^{m\times m}$ be an arbitrary symmetric positive semidefinite and
$\B\in \mathbb{R}^{n\times m}$ be an arbitrary full column rank matrix.
Denote
\begin{align*}
\R_1&=\mbox{chol}(\G+\B^\top\G_1\B), \, \R_2=\mbox{chol}(\G+\B^\top\G_2\B),\\
\R_3&=\mbox{chol}(\G+\B^\top\G_1^{-1}\B), \, \R_4=\mbox{chol}(\G+\B^\top\G_2^{-1}\B)
\end{align*}
and
\begin{align*}
\hat{\R}_1&=\mbox{chol}((\G+\B^\top\G_1\B)^{-1}),\\
\hat{\R}_2&=\mbox{chol}((\G+\B^\top\G_2\B)^{-1}),\\
\hat{\R}_3&=\mbox{chol}((\G+\B^\top\G_1^{-1}\B)^{-1}),\\
\hat{\R}_4&=\mbox{chol}((\G+\B^\top\G_2^{-1}\B)^{-1}).
\end{align*}
Then
\begin{align}
\label{e:increase1}
\lambda_i(\R_1)> \lambda_i(\R_2), \,\;\lambda_i(\hat{\R}_1)< \lambda_i(\hat{\R}_2),\\
\lambda_i(\R_3)< \lambda_i(\R_4), \,\;\lambda_i(\hat{\R}_3)> \lambda_i(\hat{\R}_4).
\label{e:increase2}
\end{align}
\end{corollary}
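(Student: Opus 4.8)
The plan is to reduce everything to Corollary \ref{c:mono} by verifying that the relevant pairs of matrices are SPD and that their difference is SPD, so that the monotonicity already established applies verbatim. First I would record the elementary linear-algebra facts I need: if $\G_1-\G_2$ is SPD and $\B\in\mathbb{R}^{n\times m}$ has full column rank (so $\B^\top\B$ is invertible; here $m\geq n$ is implicit since $\B$ has $n$ independent columns among $m$), then $\B^\top(\G_1-\G_2)\B = \B^\top\G_1\B-\B^\top\G_2\B$ is SPD: it is symmetric, and for any $\x\neq\0$ the vector $\B\x$ is nonzero by full column rank, hence $\x^\top\B^\top(\G_1-\G_2)\B\x = (\B\x)^\top(\G_1-\G_2)(\B\x)>0$. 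Adding the same symmetric positive semidefinite $\G$ to both $\B^\top\G_1\B$ and $\B^\top\G_2\B$ preserves their being SPD and leaves their difference unchanged, still equal to $\B^\top(\G_1-\G_2)\B$, which is SPD.

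Next I would apply Corollary \ref{c:mono} to the pair $\G+\B^\top\G_1\B$ and $\G+\B^\top\G_2\B$. Both are SPD (the semidefinite $\G$ plus an SPD matrix is SPD), and their difference is SPD by the previous paragraph. The Cholesky factors of these two matrices are exactly $\R_1,\R_2$, and the Cholesky factors of their inverses are exactly $\hat{\R}_1,\hat{\R}_2$. So Corollary \ref{c:mono} immediately yields $\lambda_i(\R_1)>\lambda_i(\R_2)$ and $\lambda_i(\hat{\R}_1)<\lambda_i(\hat{\R}_2)$ for all $i$, which is \eqref{e:increase1}.

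For \eqref{e:increase2} the only extra ingredient is the order-reversing property of matrix inversion: since $\G_1-\G_2$ is SPD with $\G_1,\G_2$ SPD, one has $\G_2^{-1}-\G_1^{-1}$ SPD (standard fact, e.g.\ $\G_2^{-1}-\G_1^{-1}=\G_2^{-1}(\G_1-\G_2)\G_1^{-1}$ after symmetrization, or via simultaneous diagonalization). Thus the \emph{pair} $(\G_2^{-1},\G_1^{-1})$ plays the role of $(\G_1,\G_2)$ in the hypothesis. Feeding this pair, together with the same $\G$ and $\B$, into the result of the first two paragraphs gives $\lambda_i\big(\mbox{chol}(\G+\B^\top\G_2^{-1}\B)\big)>\lambda_i\big(\mbox{chol}(\G+\B^\top\G_1^{-1}\B)\big)$, i.e.\ $\lambda_i(\R_4)>\lambda_i(\R_3)$, and likewise the reversed inequality for the Cholesky factors of the inverses, i.e.\ $\lambda_i(\hat{\R}_3)>\lambda_i(\hat{\R}_4)$. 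This is exactly \eqref{e:increase2}.

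I do not anticipate a genuine obstacle; the proof is a bookkeeping exercise in "congruence and inversion preserve the SPD order." The one point demanding a sentence of care is the full-column-rank hypothesis on $\B$: it is precisely what guarantees $\B\x\neq\0$ for $\x\neq\0$, hence that $\B^\top(\G_1-\G_2)\B$ is positive \emph{definite} rather than merely semidefinite, which is what Corollary \ref{c:mono} requires. (Note also, as in Remark \ref{r:semid}, that even dropping to $\G$ semidefinite is harmless because it is added identically to both sides and the Cholesky factor still exists as long as the SPD summand is present.)
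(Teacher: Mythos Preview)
Your proposal is correct and follows essentially the same route as the paper: verify that $\G+\B^\top\G_1\B$ and $\G+\B^\top\G_2\B$ are SPD with SPD difference $\B^\top(\G_1-\G_2)\B$, invoke Corollary~\ref{c:mono} for \eqref{e:increase1}, then observe $\G_2^{-1}-\G_1^{-1}$ is SPD and repeat for \eqref{e:increase2}. The only cosmetic differences are that the paper derives $\G_2^{-1}-\G_1^{-1}\succ 0$ from the Woodbury identity \eqref{e:inverse} rather than citing it as a standard fact, and that your parenthetical ``$m\geq n$'' has the inequality reversed (full column rank of $\B\in\mathbb{R}^{n\times m}$ forces $m\leq n$), though this slip does not affect the argument.
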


\begin{proof}
Since both $\G_1$ and $\G_2$ are SPD, $\G$ is symmetric positive semidefinite and  $\B$ is a full column
rank matrix, one can see that both $\G+\B^\top\G_1\B$ and $\G+\B^\top\G_2\B$ are also SPD.
Moreover, $\G_1-\G_2$ is SPD, implying that
\[
\G+\B^\top\G_1\B-(\G+\B^\top\G_2\B)=\B^\top(\G_1-\G_2)\B
\]
is also SPD. Thus, \eqref{e:increase1} follows from \eqref{e:increase}.

In the following, we show \eqref{e:increase2}.
By \eqref{e:increase1}, we only need to show that $\G_2^{-1}-\G_1^{-1}$ is an SPD matrix.
Let $\G_3=\G_1-G_2$, then by the assumption, $\G_3$ is an SPD matrix. 
Then, by \eqref{e:inverse}, we have
\begin{align*}
\G_2^{-1}&=(\G_2+\G_3)^{-1}+\G_2^{-1}(\G_2^{-1}+\G_3^{-1})^{-1}\G_2^{-1}\\
&=\G_1^{-1}+\G_2^{-1}(\G_2^{-1}+\G_3^{-1})^{-1}\G_2^{-1},
\end{align*}
which implies that $\G_2^{-1}-\G_1^{-1}$ is an SPD matrix and hence \eqref{e:increase2} holds.
\end{proof}

\subsection{Approximating the successive minima}

In this subsection, we propose a lower and an upper bound on the successive minima.

Let $\R\in \mathbb{R}^{n\times n}$ be the R-factor of the QR factorization of a full column
rank matrix $\A$ or a Cholesky factor of an SPD matrix $\G$, then we have
the following result which gives a lower and an upper bound on the successive minima of $\mathcal{L}(\R)$

\begin{proposition}
\label{p:bound}
For $i=1,2,\ldots, n$, we have
\beq
\label{e:bound}
\min_{1\leq j\leq n}|r_{jj}|\leq\lambda_i(\R)\leq \max_{1\leq j\leq i}\|\R_{1:j,j}\|_2.
\eeq
\end{proposition}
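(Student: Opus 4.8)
The plan is to prove the two inequalities of \eqref{e:bound} separately, each by an elementary argument that exploits the upper‑triangular structure of $\R$ together with the already‑recorded monotonicity \eqref{e:nondecrease}.

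\emph{Lower bound.} First I would observe that, since $\R$ is either the R-factor of the QR factorization of a full column rank matrix or the Cholesky factor of an SPD matrix, it is upper triangular with $r_{jj}\neq 0$ for $1\leq j\leq n$, so $\R$ is invertible and $\mathcal{L}(\R)$ is well defined. Take an arbitrary $\x\in\Zn\setminus\{\0\}$ and let $k=\max\{j:x_j\neq 0\}$. Because $\R$ is upper triangular and $x_l=0$ for $l>k$, the $k$-th entry of $\R\x$ equals $r_{kk}x_k$, and since $x_k$ is a nonzero integer,
\[
\norm{\R\x}_2\geq |r_{kk}x_k|\geq |r_{kk}|\geq \min_{1\leq j\leq n}|r_{jj}|.
\]
As this holds for every nonzero integer vector, $\lambda_1(\R)\geq\min_{1\leq j\leq n}|r_{jj}|$, and then \eqref{e:nondecrease} yields $\lambda_i(\R)\geq\lambda_1(\R)\geq\min_{1\leq j\leq n}|r_{jj}|$ for all $i$.

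\emph{Upper bound.} For the right‑hand inequality I would exhibit $i$ linearly independent lattice vectors of Euclidean norm at most $\max_{1\leq j\leq i}\norm{\R_{1:j,j}}_2$, namely the first $i$ columns $\R\e_1,\ldots,\R\e_i$ of $\R$, where $\e_j$ denotes the $j$-th standard basis vector of $\Zn$. Each $\R\e_j$ lies in $\mathcal{L}(\R)$; since $\R$ is upper triangular, $\R\e_j$ has zero entries below position $j$, so $\norm{\R\e_j}_2=\norm{\R_{1:j,j}}_2\leq\max_{1\leq j\leq i}\norm{\R_{1:j,j}}_2$. Moreover $\R$ is invertible, hence $\R\e_1,\ldots,\R\e_i$ are linearly independent. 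Thus the closed ball $\mathbb{B}\!\left(\0,\max_{1\leq j\leq i}\norm{\R_{1:j,j}}_2\right)$ contains $i$ linearly independent lattice vectors, and by the definition of successive minima $\lambda_i(\R)\leq\max_{1\leq j\leq i}\norm{\R_{1:j,j}}_2$.

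There is no serious obstacle here. The only point requiring a little care is the upper bound, where one must commit to a concrete family of lattice vectors (the leading columns of $\R$) and check both that their lengths are controlled by $\max_{1\leq j\leq i}\norm{\R_{1:j,j}}_2$ — which uses the triangular structure — and that there are genuinely $i$ of them that are linearly independent — which uses invertibility of $\R$. The lower bound then falls out immediately from the same triangular structure together with \eqref{e:nondecrease}.
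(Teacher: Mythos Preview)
Your proof is correct and follows essentially the same approach as the paper: the lower bound via the last nonzero coordinate of a shortest vector combined with \eqref{e:nondecrease}, and the upper bound by exhibiting the first $i$ columns of $\R$. The paper in fact gives less detail than you do for the upper bound, merely remarking that it ``can be seen from the definition of successive minima.''
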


\begin{proof}
The second inequality is well-known and can be seen from the definition of successive minima.

The first inequality follows from \eqref{e:nondecrease} and the fact that
$\lambda_1(\R)\geq \min\limits_{1\leq j\leq n}|r_{jj}|$ \cite{LenLL82}.
For the sake  of readability, in the following, we recall its proof from \cite{LenLL82}.
Let $\x\in \mathbb{Z}^{n}$ such that $\lambda_1(\R)=\|\R\x\|_2$ and suppose that the last nonzero entry of $\x$ is $x_i$,
then clearly $\lambda_1(\R)\geq |r_{ii}x_i|\geq |r_{ii}|$.
Hence $\lambda_1(\R)\geq \min\limits_{1\leq j\leq n}|r_{jj}|$ holds.
\end{proof}

\begin{remark}
Note that both of the inequalities in \eqref{e:bound} are achievable.
For example, if $\R=\alpha\I$ for some $\alpha>0$, then one can easily see that
for $i=1,2,\ldots, n$,
\[
\alpha=\min_{1\leq j\leq n}|r_{jj}|=\lambda_i(\R)=\max_{1\leq j\leq i}\|\R_{1:j,j}\|_2.
\]
\end{remark}

\begin{remark}
\label{r:lambdan}
The first inequality in \eqref{e:bound} can be slightly improved when $i=n$. Specifically, we have
$\lambda_n(\R)\geq \sqrt[n]{|\det(\R)|}$.
In fact, by \cite[Proposition 2]{OrdE17}, we have $\prod_{i=1}^n\lambda_i(\R)\geq |\det(\R)|$.
which combing with \eqref{e:nondecrease} implies the above inequality.
\end{remark}

Note that the bounds given by \eqref{e:bound} will became sharper if we use the information of
$\bbR$ to give the upper bound, where $\bbR$ is a lattice reduced upper triangular matrix of $\R$.
Some of the commonly used lattice reduction strategies to achieve this purpose include
the LLL reduction \cite{LenLL82}, KZ reduction \cite{KZ73} \cite{WenC15} and
Minkowski reduction \cite{Min96}.

By Theorem \ref{t:adlb}, Proposition \ref{p:bound} and Remark \ref{r:lambdan},
one can easily obtain the following result:
\begin{corollary}
\label{c:bound}
Let $\G_1, \G_2\in \mathbb{R}^{n\times n}$ be SPD matrices such that $\G_1-\G_2$ is
also an SPD matrix. Let $\R_1, \R_2, \R_3$ be defined as in Theorem \ref{t:adlb},
then for $i=1,2,\ldots, n-1$, we have
\begin{align*}
\lambda_i(\R_3)\geq \max\{\sqrt{\min_{1\leq j\leq n}(r^{(1)}_{jj})^2+\lambda_1^2(\R_2)},\\ \sqrt{\min_{1\leq j\leq n}(r^{(2)}_{jj})^2+\lambda_1^2(\R_1)}\}
\end{align*}
and
\begin{align*}
\lambda_n(\R_3)\geq \max\{\sqrt{|\det(\R_1)|^{2/n}+\lambda_1^2(\R_2)},\\ \sqrt{|\det(\R_2)|^{2/n}+\lambda_1^2(\R_1)}\},
\end{align*}
where $r^{(1)}_{jj}$ and $r^{(2)}_{jj}$ are the $j$-th diagonal entries of $\R_1$ and $\R_2$,
respectively.
\end{corollary}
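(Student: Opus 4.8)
The plan is to derive both inequalities by simply chaining Theorem~\ref{t:adlb} with the elementary lower bounds on the successive minima supplied by Proposition~\ref{p:bound} and Remark~\ref{r:lambdan}; no new argument is required, so the work is purely bookkeeping about which estimate feeds into which.

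First I would invoke Theorem~\ref{t:adlb} for $\G_1,\G_2$ and their Cholesky factors $\R_1,\R_2,\R_3$, which gives, for every $i\in\{1,\ldots,n\}$,
\[
\lambda_i(\R_3)\geq \max\{\sqrt{\lambda_i^2(\R_1)+\lambda_1^2(\R_2)},\ \sqrt{\lambda_i^2(\R_2)+\lambda_1^2(\R_1)}\}.
\]
Then, for $1\leq i\leq n$, the first inequality of Proposition~\ref{p:bound} yields $\lambda_i(\R_1)\geq \min_{1\leq j\leq n}|r^{(1)}_{jj}|$ and $\lambda_i(\R_2)\geq \min_{1\leq j\leq n}|r^{(2)}_{jj}|$; squaring (everything in sight is nonnegative) and substituting into the displayed bound produces the first claimed inequality, which in particular holds for $i=1,\ldots,n-1$. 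For the remaining case $i=n$, I would instead use Remark~\ref{r:lambdan}, i.e.\ $\lambda_n(\R_1)\geq |\det(\R_1)|^{1/n}$ and $\lambda_n(\R_2)\geq |\det(\R_2)|^{1/n}$; squaring these and plugging them into the $i=n$ instance of the Theorem~\ref{t:adlb} bound gives the second claimed inequality. (The first bound is stated only up to $i=n-1$ because for $i=n$ one has $|\det(\R_1)|^{1/n}\geq \min_j|r^{(1)}_{jj}|$, so the determinant version is at least as strong there.)

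Since each step is a direct application of an already-established result, there is no real obstacle; the only points deserving a moment's care are that the monotonicity \eqref{e:nondecrease} is precisely what makes the lower bound $\lambda_i(\R_k)\geq \min_j|r^{(k)}_{jj}|$ legitimate for all $i$ and not just $i=1$, and that replacing $\lambda_i^2(\R_k)$ by the smaller quantity $\min_j(r^{(k)}_{jj})^2$ (or by $|\det(\R_k)|^{2/n}$) inside $\sqrt{\,\cdot\,+\lambda_1^2(\R_{3-k})}$ is valid because $x\mapsto\sqrt{x+c}$ is increasing. The hypothesis that $\G_1-\G_2$ is SPD plays no role in this corollary — it is carried over from Corollary~\ref{c:mono} for uniformity of statement — so I would not use it.
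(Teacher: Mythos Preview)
Your proposal is correct and matches the paper's approach exactly: the paper simply states that the corollary follows from Theorem~\ref{t:adlb}, Proposition~\ref{p:bound} and Remark~\ref{r:lambdan}, which is precisely the chaining you describe. Your side observation that the hypothesis that $\G_1-\G_2$ is SPD is not actually used is also correct.
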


\section{Improving IF design for C-RAN}
\label{s:IF}


An algorithm for suboptimally solving an optimization problem about maximizing the achievable symmetric rate
for the IF strategy with parallel channel decoding and parallel decompression for C-RAN
has been proposed in \cite{BakN17}.
In this section, we will use some  properties of successive minima, that were developed in Sec. \ref{s:property},
to improve its efficiency.

C-RAN is a promising framework for 5G wireless communication systems.
An end-to-end IF architecture for C-RAN has recently been proposed in \cite{BakN17}.
Its main idea is to employ an  IF source coding \cite{OrdE17}, which can be either symmetric or asymmetric,
to send the channel observations to the central processor.
Then, IF channel coding \cite{ZhaNEG14} is utilized to decode the channel codewords.
By \cite[Theorem 1]{BakN17}, the achievable symmetric rate of the
IF strategy with parallel channel decoding and parallel decompression is
\begin{align}
\label{e:rate}
\mathcal{R}=\max_{d>0, \X\in \mathbb{Z}^{n\times n}, \det(\X)\neq 0}\min_{1\leq i\leq n}\frac{1}{2}\log^{+}\left(\frac{P}{\|\F(d)\x_i\|_2^2}\right)
\end{align}
subject to
\begin{align}
\min_{\bar{\X}\in \mathbb{Z}^{n\times n},\det(\bar{\X})\neq 0}\max_{1\leq i\leq m}\frac{1}{2}\log^{+}(\|\bar{\F}(d)\bar{\x}_i\|_2^2)&\leq C,
\label{e:consd}
\end{align}
with
 \begin{align}
\label{e:F}
\F(d)&=\mbox{chol}((P^{-1}\I+(\B\H)^\top(\B\B^\top+d\I)^{-1}\B\H)^{-1}),\\
\bar{\F}(d)&=\mbox{chol}(d^{-1}(P\B\H(\H\B)^\top+\B\B^\top)+\I),
\label{e:Fbar}
\end{align}
where $\H\in \mathbb{R}^{m\times n}$ is the channel matrix from $n$ users to the $L$ base stations,
$\B\in \mathbb{R}^{m\times m}$ is block diagonal matrix which has $L$ blocks with each of them being
a linear equalizer, $P$ is a constant about the power constraint on the codeword, $C$ is a capacity
and $\log^+(x) \triangleq \max \left( \log(x),0 \right)$.

By \eqref{e:rate}-\eqref{e:Fbar}, one can see that to find a matrix $\hat{\X}$ which maximizes
$\mathcal{R}$, one needs to find $d$ satisfying \eqref{e:consd}
to explicitly form $\F(d)$ (see \eqref{e:F}).
Suppose that $d$ is found, then finding $\hat{\X}$ is equivalent to solving the following problem:
\begin{align}
\label{e:rate2}
\hat{\X}=\argmin_{\X\in \mathbb{Z}^{n\times n}, \det(\X)\neq 0}\max_{1\leq i\leq n}\|\F(d)\x_i\|_2^2.
\end{align}
By the definition of successive minima and \eqref{e:SIVP}, one can see that \eqref{e:rate2} is actually
a SIVP problem which is suppose to be NP-hard.
Moreover, finding $d$'s satisfying \eqref{e:consd} is also time consuming,
Hence a suboptimal algorithm is proposed to solve \eqref{e:rate} in \cite{BakN17}.

In the following, we briefly recall the suboptimal algorithm in \cite{BakN17}.
It is claimed  that $\lambda_n(\F(d))$ (see \eqref{e:F}) is monotonically increasing in $d$ without proof
(in fact this can be seen from Corollary \ref{c:mono2} and \eqref{e:F}), thus the smallest $d$
satisfies
\beq
\label{e:consd2}
\min_{\bar{\X}\in \mathbb{Z}^{n\times n},\det(\bar{\X})\neq 0}\max_{1\leq i\leq m}\frac{1}{2}\log^{+}(\|\bar{\F}(d)\bar{\x}_i\|_2^2)= C
\eeq
is the desired $d$.
In fact, by the definition of successive minima, \eqref{e:consd} is equivalent to $\lambda_n(\bar{\F}(d))\leq \exp(2C)$.
By Corollary \ref{c:mono2} and \eqref{e:Fbar}, one can see that $\lambda_n(\bar{\F}(d))$ is decreasing with $d$.
Thus, the desired $d$ is the one satisfies \eqref{e:consd2}.
After finding $d$ and explicitly forming  $\F(d)$, the LLL reduction is used to find a suboptimal $\hat{\X}$.
A bisection search method coupled with the LLL reduction on $\bar{\F}(d)$ has been proposed in
\cite{BakN17} to find an approximation solution of \eqref{e:consd2} to get a suboptimal $d$.
The bisection method is initialized by setting
$d_{\min}=0$ and $d_{\max}$ large enough such that \eqref{e:consd} holds
($d_{\max}$ is not explicitly given).

In the following, we improve this suboptimal algorithm.
First, instead of using the LLL reduction, for efficiency, we use the PLLL reduction which was
proposed in \cite{XieCA11} followed by size reduction.
The latter has the same performance as the former in this
application, but it is around $\bigO(n)$ times faster than the former.
Second, instead of setting $d_{\min}=0$ and $d_{\max}$ large enough such that \eqref{e:consd} holds
as in \cite{BakN17}, we use a larger $d_{\min}$ and explicitly giving $d_{\max}$.
More details on this are giving as follows.

By the above analysis, finding the smallest $d$ is equivalent to solving $\lambda_n(\bar{\F}(d))=\exp(2C)$.
Since $\lambda_n(\bar{\F}(d))$ is decreasing with $d$, to use the bisection method to find the
desired $d$,
we need to find $d_{\min}$ and $d_{\max}$ such that $\lambda_n(\bar{\F}(d_{\min}))\geq\exp(2C)$
and $\lambda_n(\bar{\F}(d_{\max}))\leq \exp(2C)$, respectively.
To this end, we denote
\beq
\label{e:GF}
\G=\mbox{chol}(P\B\H(\H\B)^\top+\B\B^\top),\,\;
\hat{\F}=\mbox{chol}(\G).
\eeq
Then, by \eqref{e:Fbar}, \eqref{e:GF} and  Corollary \ref{c:bound}, we have
\begin{align}
\label{e:Flbd}
\lambda_n(\bar{\F}(d))&\geq \sqrt{|\det(d^{-1/2}\hat{\F})|^{2/n}+1}\nonumber\\
&= \sqrt{d^{-1}|\det(\hat{\F})|^{2/n}+1}.
\end{align}
Let $d_{\min}$ be the solution of
\[
\sqrt{d^{-1}|\det(\hat{\F})|^{2/n}+1}=\exp(2C),
\]
then by \eqref{e:Flbd}, $d_{\min}$ satisfies $\lambda_n(\bar{\F}(d_{\min}))\geq\exp(2C)$.

By performing the PLLL reduction (see \cite{XieCA11})  followed by size reduction on $\hat{\F}$ to
find an unimodula matrix $\Z$ (i.e., $\Z\in \mathbb{Z}^{n\times n}$ such that $|\det(\Z)|=1$)
such that $\hat{\F}\Z$ is LLL reduced, then $d^{-1/2}\hat{\F}\Z$
is also LLL reduced for any $d>0$.
Let $d_{\max}$ be the solution of
\[
\sqrt{\max_{1\leq i\leq n}\Z_i^\top(d^{-1}\G+\I)\Z_i}=\exp(2C).
\]
then by the definition of $\lambda_n$, one can see that $\lambda_n(\bar{\F}(d_{\max}))\leq \exp(2C)$.

\section{Conclusion}
\label{s:con}

The successive minima of a lattice is important in both communications and cryptography.
In this paper, we investigated some properties of successive minima and then employed
some of them to improve the efficiency of the suboptimal algorithm for solving an optimization
problem about maximizing the achievable rate of the IF for C-RANs in \cite{BakN17}.

\bibliographystyle{IEEEtran}
\bibliography{ref}

\end{document}